\appto\UrlBreaks{\do\-}
\crefname{section}{\S\!}{\S\S}
\crefname{figure}{Figure}{Figures}
\crefname{table}{Table}{Tables}
\newcommand{\sys}{ProLoc}
\newcommand{\device}{\ensuremath{d}}
\newcommand{\trustanchors}{\ensuremath{\mathit{TA}}}
\newcommand{\locc}{\ensuremath{\theta}}
\newcommand{\radius}{\ensuremath{r}}
\newcommand{\regionA}{\ensuremath{R}}
\newcommand{\region}{\regionA}
\newcommand{\regionB}{\ensuremath{S}}
\newcommand{\regionof}[2]{\region_{#1}^{#2}}
\newcommand{\kw}[1]{\ensuremath{\mathsf{#1}}}
\newcommand{\api}[1]{\mbox{\texttt{#1}}}
\title{
  \sys: Robust Location Proofs in Hindsight
}
\def\eg{{e.g., \xspace}}
\newif\iftr
\newcommand{\btitle}[1]{{\iftr\medskip\else\vspace{0.3\baselineskip}\fi\noindent\textbf{#1}}}
\begin{document}
\author{Roberta De Viti}
\authornote{These authors contributed equally to this work and are listed alphabetically.}
\author{Pierfrancesco Ingo}
\authornotemark[1]
\author{Isaac Sheff}
\author{Peter Druschel}
\author{Deepak Garg}
\affiliation{
	\institution{Max Planck Institute for Software Systems (MPI-SWS)\\Saarland Informatics Campus}
	\streetaddress{}
	\city{Saarbrücken}
	\state{}
	\country{Germany}
	\postcode{}
}

\begin{abstract}
	Many online services rely on self-reported locations of user devices
like smartphones. To mitigate harm from falsified self-reported
locations, the literature has proposed \emph{location proof services}
(LPSs), which provide proof of a device's location by corroborating
its self-reported location using short-range radio contacts with
either trusted infrastructure or nearby devices that also report their
locations.

This paper presents {\sys}, a new LPS that extends prior work in two
ways. First, {\sys} relaxes prior work's proofs that a device
was \emph{at} a given location to proofs that a device
was \emph{within distance $d$} of a given location. We argue that
these weaker proofs, which we call \emph{region proofs}, are important
because (i) region proofs can be constructed with few requirements on
device reporting behavior as opposed to precise location proofs which
cannot, and (ii) a quantitative bound on a device's distance from a
known epicenter is useful for many applications. For example, in the
context of citizen reporting near an unexpected event (earthquake,
fire, violent protest, etc.), knowing the verified distances of the
reporting devices from the event's epicenter would be valuable for
both ranking the reports by relevance and flagging fake reports.

Second, {\sys} includes a novel mechanism to prevent collusion attacks
where a set of attacker-controlled devices corroborate each others'
false locations. Ours is the first mechanism that does not need
additional infrastructure to handle attacks with made-up (virtual)
devices, which an attacker can create in any number at any location
without any cost. For this, we rely on a variant of the TrustRank
algorithm applied to the self-reported trajectories and encounters of
devices. Our specific goal is to prevent \emph{retroactive} attacks
where the adversary cannot predict ahead of time which fake location
it will want to report, which is the case for the reporting of
unexpected events.

\end{abstract}

\maketitle

\section{Introduction}
\label{sec:intro}

Nowadays anyone with a smartphone can capture, document, and
disseminate news. This collective power to share real-time,
unfiltered, accounts of events is the essence of \textit{citizen
  journalism}. By democratizing the reporting of information, citizen
journalism represents a seismic shift from traditional media and has
been reshaping the landscape of news reporting, with social media
platforms like Instagram, Facebook, and Reddit serving as outlets for
news dissemination.

Citizen journalism plays a crucial role in offering pervasive,
on-the-ground perspectives, particularly during crises or unexpected
events. For instance, the Iranian presidential election protests in
2009 and the Egyptian uprising in 2011 showcased the profound impact
of citizen reporting, and its capability to bypass government
censorship and mobilize collective
action~\cite{britannica}. Similarly, the 2005 London bombings were
extensively covered by citizens’ on-site photos and videos, which
traditional news outlets later incorporated, recognizing the value of
the information that citizens provided~\cite{yale}.

However, citizen journalism is not subject to the same journalistic
scrutiny as news in mainstream news media. This means citizen
reports may be of varying quality, reflect bias, and even present an
avenue for deliberate propaganda and misinformation.  Therefore, the
ability to verify the quality and legitimacy of citizen journalistic
reports is crucial. However, doing so presents a complex,
multi-layered challenge. We take an important step towards addressing
this challenge: validating the location of \textit{geo-tagged}
reports.

Verifying the geolocation of a citizen at the time of an event is
relevant for both the quality and veracity of the citizen's report
about the event. First, a verified geolocation speaks to the quality
of the report, because a citizen who was near the event is more likely
to have observed the event first-hand.  Second, citizen journalism is
most relevant for events whose time and location is not predictable
(e.g., natural disasters, accidents, attacks) because mainstream media
and journalists are not likely to be present initially. For such
reports, a verified space-time location speaks to the authenticity of
the report and any associated digitally captured material (audio,
video, still images). This is because an adversary intent on
submitting biased or false reports about such an event would have to
have a device physically near the (unpredictable) time and location of
the event, which is a challenge even for powerful adversaries like
troll farms.

\btitle{Requirements and prior work.}  Prior work has proposed several
\emph{location proof services}\footnote{Location proofs are not proofs
in the mathematical sense; we adopt the term because it is used in the
literature.} or LPSs that provide independent evidence of a device's
(self-reported)
geolocation~\cite{saroiu2009enabling,veriplace,Hasan2011WhereHY,crepuscolo,applaus,link,stamp,globalattestation}.
However, citizen journalism and similar scenarios place specific
requirements on a LPS. First, the system should be able to generate a
location proof for an arbitrary time and location a user
visited. Second, the system must be robust to collusion attacks that
seek to generate false location proofs using many fictitious devices.
Existing LPSs are not a good match for these requirements. Some systems
depend on radio contact with trusted WiFi base stations at known
locations, which is unlikely to be available at the location of an
unpredictable
event~\cite{saroiu2009enabling,veriplace,Hasan2011WhereHY}. Other
systems instead depend on short-range radio contacts with nearby
devices that can corroborate each others' location
reports~\cite{link,stamp,applaus}. However, they require radio contact
with nearby devices at the precise time and location in question. If
such radio contacts exits, then a proof for the specific space-time
location is issued; otherwise, no proof can be issued at all. Lastly,
while some existing LPSs include partial defenses against collusion
attacks, none are robust against large-scale collusion by fictitious
devices.

\btitle{\sys.}  To this end, we present a new LPS, {\em \sys}. For a
given device $d$, instant in time $t$, and a required number of {\em
  witnesses} $N$, {\sys} can determine retroactively a {\em feasible
  region $S$} within which $d$ could have been present at time $t$,
given the evidence collected from $N$ witnesses.  Witnesses are unique
devices that can independently corroborate $d$'s presence around
$t$. {\sys} determines the feasible region by correlating, during a
period around $t$, (1) short-range radio encounters between $d$ and
nearby witness devices; (2) the geolocations recorded by $d$ and the
witnesses around $t$; and (3) uses the transportation network map in
and around the feasible region to see how far a device could have
traveled between radio contacts and geolocation reports.

By weakening (discrete) location proofs to regions proofs, {\sys} is
able to utilize short-range radio contacts and geolocations that were
recorded {\em around} but not necessarily {\em precisely at} time
$t$. Therefore, it is able to produce location proofs in many
scenarios where existing systems cannot. On the other hand, we will
show that region proofs are useful towards assessing the veracity of
citizen journalistic reports.

We note that media platforms that support citizen journalism typically
record geolocations periodically reported by users already.  {\sys}
additionally relies on Bluetooth Low Energy (BLE) advert exchanges,
which smartphones can emit and record with high energy efficiency.

LPSs that do not rely on trusted infrastructure are susceptible to
collusion attack. In particular, a device can fake radio contacts with
colluding devices to be able to obtain a fake location proof.  {\sys}
focuses on \textit{retroactive attacks}, where the adversary cannot
predict upfront for what time and location it wishes to generate a
false proof.  We consider this assumption consistent with scenarios
like citizen journalism, which is most relevant for unexpected events
–-- otherwise, traditional news websites would have their own
journalists present at the event.

An adversary may create a large number of \textit{fictitious devices}
that can corroborate false location reports.  {\sys} enables defences
against a large class of these collusion attacks involving fictitious
devices.  {\sys}'s defence relies on trust computations based on
devices’ connectivity graph. To the best of our knowledge, {\sys} is
the first infrastructure-less LPS robust to large-scale retroactive
attacks with fictitious devices.

We believe that {\sys} provides an important step towards enhancing
the reliability of citizen journalism reportings.  ProLoc offers: (i)
a practical method to provide region proofs in hindsight, based on
corroborating evidence recorded around the time of an event; and (ii)
the first infra\-structure-less defence against a large class of
easy-to-mount collusion attacks with many fictitious
devices, and a limited number of adversarial physical devices.

This paper's contribution includes (i) {\em \sys}, a system that
validates location reports retroactively, and produces region proofs
based on the device's radio contacts around the time on question; (ii)
an analysis of powerful collusion attacks and the first defense that
can tolerate large-scale collusion, including multiplicity and
stalking, without requiring trusted infrastructure; and, (iii) an
experimental evaluation based on a simulated dataset and on the DTU
dataset of locations and BLE contacts collected from 850 devices over
a period of 3 years.  In the remainder of this paper, we provide an
overview of \sys's design in \cref{sec:design-prelim}. Then, we
discuss \sys's API in~\cref{sec:service}, which describes how we
consolidate user trajectories to generate location proofs, and we
present our defense against large-scale collusion attacks
in~\cref{sec:dev-trust}.  We report an experimental evaluation of
\sys\ in~\cref{sec:data} and ~\cref{sec:eval}, and discuss related
work in~\cref{sec:related}.

\section{Design preliminaries}
\label{sec:design-prelim}

\label{sec:background}

\btitle{Location proof vs. location provider.}  A {\em location proof
  service} (LPS) like \sys\ is orthogonal to a {\em location
  provider service}; the two serve different purposes and may provide
different results depending on the scenario. For instance, the
location provider of a smartphone carried by a lone hiker in a
national park may provide location with a precision of a few meters,
which is great for navigation and even to provide coordinates in an
emergency call. However, if the hiker provides this location to the
third party, then, from the third party's perspective, the only root
of trust for the location is the hiker's phone. A location proof
service, on the other hand, could provide at best place the hiker
somewhere in a large feasible region based on the most recent contacts
the phone had at the parking lot and the distance the hiker could have
traveled in the interim. Such a proof could still be useful if the
hiker wishes to subsequently prove to a third party that they were in
or near the park. On the other hand, the smartphone of a user in a
busy downtown hotel not only obtains a precise location, but may be
able to obtain a location proof with a tight feasible region placing
him within tens of meters of a hotel, because of its numerous BLE
contacts with a large and diverse set of devices. As a result, the
user can prove, for instance, that they actually were close to the
hotel when they claim to have witnessed a fire in the hotel.

\btitle{State-of-the-art.}  As detailed~\cref{sec:related},
prior work has shown that devices can obtain location proofs from WiFi
APs or from nearby devices via short-range
radio~\cite{saroiu2009enabling,veriplace,Hasan2011WhereHY,crepuscolo,applaus,link,stamp,globalattestation}.
However, existing systems can attest locations only when a location
proof was obtained {\em a priori} from nearby APs or devices at the
time, or in the case of~\cite{globalattestation}, when a
\emph{trustworthy} device was present at the exact time. Since proof
generation is an expensive operation, it is not feasible to
continuously generate proofs for all locations visited to anticipate a
possible future need. Hence, existing systems cannot be used to
  generate location proofs \emph{a posteriori} (without advance
  planning), as would be needed in the context of citizen
  journalism. In contrast, we designed {\sys} to generate proofs a
  posteriori.

To mitigate collusion attacks, some prior work has used plausibility
and consistency checks on device trajectories, radio contacts, witness
selection, and voting
record~\cite{applaus,link,stamp,globalattestation}.
However, no existing work is robust against collusion attacks
involving large numbers of fictitious devices that corroborate each
others' statements. {\sys} includes a defense that works against
  retroactive collusion attacks even with large numbers of fictitious
  devices.

\btitle{\sys\ model.}
Every device participating in {\sys} locally stores and periodically
uploads its {\em device history}, which is a time series of
geographical {\em locations} and received {\em near-range radio
  transmissions} from peers.
Locations originate from the location service of the device's platform
(e.g., iOS or Android), which typically relies on a combination of
GPS, cellular, and WiFi-based methods.  Locations are 
recorded on a regular basis (e.g., every 5 minutes while a device is
on).

Near-range transmissions occur over short-range radio like Bluetooth
Low Energy (BLE).  As explained in prior
work~\cite{SDDR,crepuscolo,enclosure}, near-range transmissions
typically take the form of an ephe\-me\-ral id, a high
entropy value unique to the transmitting device and epoch.  A device
history is a time series of location reports (the device's trajectory)
interspersed with ephemeral ids received from nearby devices.

An {\em encounter} is a period of co-location between two devices,
called the encounter {\em peers}. Individual received ephe\-me\-ral ids
do not directly constitute encounters.
An encounter is indicated
only when both peers receive each others' transmissions regularly for
the duration of the encounter.  Detecting encounters requires
correlating the histories of potential encounter peers by the
backend.  Encounters tie histories of different devices to each other,
as illustrated in~\cref{fig:encountergraph}.

\begin{figure}[t]
	\centering
  \caption{User device trajectories (colored lines) containing
           location reports, connected by encounters.}
	\label{fig:encountergraph}
	\includegraphics[width=0.32\textwidth]{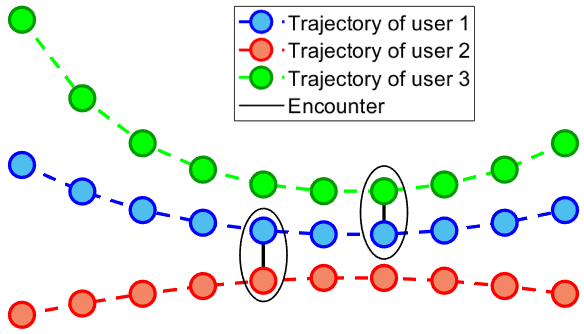}
  \vspace{-5mm} %
\end{figure}

\btitle{Assumptions.}
{\sys} integrates into existing mobile services that already
  collect location histories of devices. {\sys} requires devices to
  additionally collect and upload pseudonymous transmissions received
  over near-range radio from peer devices, and requires the service
  backend to implement the location proof algorithm
  (\cref{sec:service}) and the defense against collusion attacks
  (\cref{sec:attacks-sys}).  
  We note that these services that possess user location data
  can already infer periods of co-location; the addition of
  BLE data further corroborates these interactions without significantly
  expanding the platforms' existing knowledge on their users.

Devices are expected to upload their history within minutes of when
events happened, effectively asking devices to commit to their
trajectory and encounters immediately. Early commitment denies an
adversary scope for certain retrospective attacks, where it rewrites
the history of devices it controls to suit emerging objectives. The
requirement is reasonable, because most smartphones are connected
continuously, the bandwidth requirements are modest, and no
  user action is required. Late uploads of transmission received
  from peers are accepted, but an encounter can be used in a proof
only if at least one of the peers uploaded the encounter in a timely
fashion.

Our defense against collusion attacks relies on the detection of suspicious devices based on their connectivity to other devices. This fundamentally
  requires knowledge of some
  trusted devices as trust anchors. As we will show in
  \S\ref{sec:eval}, a very small number of devices is sufficient
  in \sys, and the set of devices can be different for each
  verifier. A verifier can choose their set freely from devices they
  trust, e.g., devices owned/operated by their own organization.

\btitle{Power considerations.}  Location traces can be collected
in a power-efficient manner using standard platform services. Sending
and capturing BT advertisements in a power-efficient (and
privacy-preserving) manner has been demonstrated in prior
work~\cite{SDDR,enclosure} and is now widely used in COVID-19 contact
tracing apps.

\subsection{Key Insights}
The key insight behind {\sys} is that continuous collection,
  aggregation, and analysis of device locations and encounters can
  achieve properties not available in existing systems:

\btitle{1) \sys\ can provide a feasible region for
    each instant along a device's trajectory {\em a posteriori}}. To
  compute this region at instances when a device did not encounter
  other devices, \sys\ extrapolates from the closest previous and next
  confirmed locations and the maximal distance the device may have
  traveled in the interim. To compute this region, \sys\ intersects
  \emph{isochrones} obtained from the OpenStreetMap~\cite{openstreetmap}
  service. An isochrone bounds the set of locations reachable from a
  starting point within some period, in any direction, given
  the topology of the road and transportation network and the
    feasible speed of travel along each route.

\btitle{2) \sys\ can mitigate retroactive collusion attacks involving
  fictitious devices.}  Because {\sys} requires devices to upload and
thereby commit to their locations and encounters in a timely fashion,
an attacker faces the difficulty of having to produce plausible device
trajectories and encounters that afford its devices enough trust to
appear legitimate, and are suitably positioned for an attack.
Moreover, we sketch how the defense can be extended to cover {\em
  premeditated} collusion attacks as well.

\section{{\sys} Service}
\label{sec:service}
\label{sec:loc-conf}

{\sys}'s backend allows any participating device, called the
\emph{prover}, to prove a posteriori to a third-party, the
\emph{verifier}, that it was within a given region $\regionB$ at time
$t$. The backend's only API call,
$\api{prove\_loc}(\device, \regionB, t, \trustanchors, N)$, 
returns true when {\sys} is able to prove that device $\device$ was
definitely within region $\regionB$ (specified as a bounding polygon
on a map) at time $t$ and false otherwise.
The parameters $\regionB$, $t$, $\trustanchors$ and $N$ are
  determined by the \emph{verifier}. $\trustanchors$ and $N$ are
  needed for {\sys}'s defense against collusion attacks
  (\cref{sec:attacks-sys}). $\trustanchors$ is a (small) set of
  \emph{trust anchors} -- devices that the verifier knows to be
  honest. As explained in \S\ref{sec:dev-trust}, {\sys} uses
  $\trustanchors$ to seed a TrustRank algorithm which identifies
  suspicious devices. $N$ is the number of encounter peers which
    \emph{independently} corroborate $d$'s presence in $S$ at time
    $t$. A higher $N$ increases {\sys}'s robustness to collusion
  attacks, but may also increase the sizes of regions $\regionB$ for
  which location proofs can be successfully found.

To answer the call $\api{prove\_loc}(\device, \regionB, t,
  \trustanchors, N)$,
  {\sys}'s backend picks $M > N$ non-suspicious device peers which
  reported encounters with $\device$ close to time $t$. These peers,
  called \emph{potential witnesses}, are denoted $w_1, ..., w_M$
  here. For each potential witness $w_i$, the backend computes a
  region $\regionA_i$ in which $\device$'s presence at time $t$ can be
  established using information from $w_i$ alone: $w_i$'s location
  history and its encounters with $d$. If $\regionA_i$ is contained in
  $\regionB$, then the potential witness $w_i$ is a \emph{valid
  witness} for the proof. The call $\api{prove\_loc}(\device,
  \regionB, t, \trustanchors, N)$ returns true (successfully) if and
  only if the backend finds at least $N$ valid witnesses, i.e., if the
  \emph{union} of some $N$ regions out of the $M$ regions $\regionA_1,
  \ldots, \regionA_M$ is contained in $\regionB$. The $N$ valid
  witnesses form the proof's \emph{quorum} and this union is called
  $d$'s \emph{feasible region} at time $t$.

The region $\regionB$ may be a circle specified as a center and
radius, $(\locc, \radius)$. The API call $\api{prove\_loc}(\device,
(\locc,\radius), t, \trustanchors, N)$ then asks, ``Can $N$ devices'
histories independently confirm that $d$ was present within distance
$\radius$ of the point $\locc$ at time $t$?'' In this case, we also
call $\radius$ the \emph{precision}. Note that smaller $r$ correspond
to better precision.  

For privacy reasons, a device may ask only for its own location
proofs, i.e., a device may act as prover only for itself. However,
{\sys} cryptographically signs the result (true or false) of the call
together with the parameters to allow the device to prove its location
to the verifier.

Next, we describe {\sys}'s proof algorithm in detail.

\btitle{Notation.}  We use the letter $\locc$ and its variants like
$\locc'$, $\locc_w$, etc.\ to denote points on a map. Similarly, we
use $t$ to denote time points, $\radius$ to denote lengths, and
$\region, \regionB$ for regions.

\btitle{Mapping service.}  \sys\ relies on a map service
like OpenStreetMap~\cite{openstreetmap}. This service provides a call
$\api{isochrone}(\region, T)$ which returns the entire region a device
could have reached in time $T$ starting somewhere in region $\region$
using any available means of transport at the fastest speeds possible.

\begin{figure}[t]
  \definecolor{comment}{rgb}{0,0.4,0}
  \definecolor{function}{rgb}{0,0,0.6}
  \definecolor{api-call}{rgb}{0.4,0,0}
  \definecolor{return}{rgb}{0.4,0,0}
  \definecolor{map}{rgb}{0.7,0.3,0.2}
  \definecolor{D}{rgb}{0.5,0,0.5}
  \vspace{-3mm} %
  \caption{Pseudocode of $\api{prove\_loc}$.
    $\api{{\color{D}\normalfont Dbase}}$ is {\sys}'s internal database, and
    $\api{{\color{map}\normalfont Map}}$ is the map service.
  }
  \label{fig:conf-region}
  \begin{tabular}{@{}l@{}}
    \hline
    API call {\color{api-call}$\api{prove\_loc}(\device, \regionB, t, \trustanchors, N)$} \\
    \hline
    {\color{comment}\# Compute region for $\device$ w.r.t.\ a single peer location}\\
    {\color{function}$\kw{function}$ $\api{r\_peer\_loc}(t_e, t_w, \locc_w)$:} \\
    \hspace{4mm} $\regionof{\device_w}{t_e} \leftarrow \api{{\color{map}Map}.isochrone}(\{\locc_w\}, |t_e - t_w|)$ \\
    \hspace{4mm} $\regionof{\device}{t_e} \leftarrow \api{{\color{map}Map}.grow\_region}(\regionof{\device_w}{t_e}, r_{BLE})$ \\
    \hspace{4mm} ${\color{return}\kw{return}}(\api{{\color{map}Map}.isochrone}(\regionof{\device}{t_e}, |t - t_e|))$ \\
    {\color{comment}\# Compute region for $\device$ w.r.t.\ a single peer encounter}\\
    {\color{function}$\kw{function}$ $\api{r\_encounter}(E_i, \device)$:} \\
    \hspace{4mm} $w \leftarrow E.\api{other\_peer}(\device)$ \\
    \hspace{4mm} $t_e \leftarrow E.\api{timeof}()$ \\
    \hspace{4mm} $(\locc_w, t_w) \leftarrow \api{{\color{D}Dbase}.prev\_location\_report}(\device_w, t_e)$ \\
    \hspace{4mm} $(\locc_w', t_w') \leftarrow \api{{\color{D}Dbase}.next\_location\_report}(\device_w, t_e)$ \\
    \hspace{4mm} $\regionA \leftarrow {\color{function}\api{r\_peer\_loc}(t_e, t_w, \locc_w)}$ \\
    \hspace{4mm} $\regionA' \leftarrow {\color{function}\api{r\_peer\_loc}(t_e, t_w', \locc_w')}$ \\
    \hspace{4mm} $\regionA_i \leftarrow \regionA \mathrel{\cap} \regionA'$ \\
    \hspace{4mm} ${\color{return}\kw{return}}(\regionA_i, w)$\\
    {\color{comment}\# Top level code }\\
    $[E_1, \ldots, E_M] \leftarrow \api{{\color{D}Dbase}.encs\_select}(\device, t, \trustanchors)$\\
    $Q \leftarrow \emptyset$ {\color{comment}~~~~~~~~\# Quorum};  \hspace{1mm}
    $\region_F \leftarrow \emptyset$ {\color{comment}~~~~~~~~\# Feasible region}\\
    $\kw{for}~ i ~\kw{in}~ 1 \ldots M$:\\
    \hspace{4mm} $(\regionA_i, w_i) \leftarrow \api{r\_encounter}(E_i, \device)$\\
    \hspace{4mm} $\kw{if}~\regionA_i \subseteq \regionB~\kw{then}~ \{$
     $Q \leftarrow Q \mathrel{\cup} \{w_i\};$
    \hspace{1mm} $\region_F \leftarrow \region_F \mathrel{\cup}  \regionA_i\ \}$\\
    $\kw{if}~|W| \geq N ~\kw{then}~ \kw{return}(\kw{true}) ~\kw{else}~ \kw{return}(\kw{false})$\\
    \hline
  \end{tabular}
  \vspace{-5mm}
\end{figure}

\btitle{{\sys}'s algorithm.}
To answer the call $\api{prove\_loc}\linebreak[6](\device, \regionB,
t, \trustanchors, N)$, {\sys} first uses the \emph{encounter selection
algorithm} described later to select encounters $E_1, \ldots, E_M$
that $\device$ had with distinct peers in the vicinity of time
$t$. Let $w_1, \ldots, w_M$ be the respective encounter peers (these
are the potential witnesses). For each $E_i$, $1 \leq i \leq M$,
{\sys} uses the function $\api{r\_peer}(E_i, \device)$ described below
to construct a region $\regionA_i$ in which $d$'s presence at time $t$
can be established from information provided by $w_i$ alone. Next, for
each $i$, {\sys} checks whether $\regionA_i$ is contained in
$\regionB$ or not. If it is contained, then {\sys} has found a valid
witness. The algorithm stops with success if $N$ valid witnesses are
found, else it ends in failure.
\cref{fig:conf-region} summarizes the algorithm.

\btitle{Function $\api{r\_peer}(E_i, \device)$.} Let $t_{e}$ be the time at
which encounter $E_i$ happened and let $w$ be the encounter peer (the
potential witness). Let $\locc_w$ and $\locc_w'$ be the self-reported
locations of $w$ preceding and following $t_{e}$, and let these
locations be reported at times $t_w$ and $t_w'$, respectively.

We start from one of $\device_w$'s location reports, say, $\locc_w$ at
time $t_w$. We determine a possible region
$\regionof{\device_w}{t_{e}}$ for $\device_w$ (not $\device$) at the
time of the encounter, $t_{e}$. Since $w$ was at $\locc_w$ at time
$t_w$, this region is simply $\regionof{d_w}{t_{e}} =
\api{isochrone}(\{\locc_A\}, |t_{e} - t_w|)$.

Next, we compute a possible region $\regionof{\device}{t_{e}}$ for
$\device$ (not $\device_w$) at time $t_{e}$. Since $\device$
encountered $\device_w$ at this time, $\device$ must have been within
BLE range, say $r_{BLE}$ meters, of $\device_w$ at time $t_{e}$. So,
$\regionof{\device}{t_{e}}$ is obtained by expanding the region
$\regionof{\device_w}{t_{e}}$ on the map by $r_{BLE}$.
Then, we compute a possible region $\region$ for $\device$ at time $t$
as $\region = \api{isochrone}(\regionof{\device}{t_e}, |t - t_{e}|)$.

We repeat this process with $\device_w$'s second reported location
$\locc_w'$ at time $t_w'$. This yields a second possible region
$\region'$ for $\device$ at time $t$.
Since $\device$'s location at time $t$ is constrained by both
$\region$ and $\region'$, the required region $\regionA_i$ is the
\emph{intersection} $\regionA_i = \region \mathrel{\cap} \region'$.

\begin{figure}[t]
  \centering
  \vspace{-3mm} %
  \caption{Feasible region (shaded) w.r.t.\ one witness
    $w$. }
  \label{fig:feasible-region}
  \includegraphics[width=0.7\linewidth]{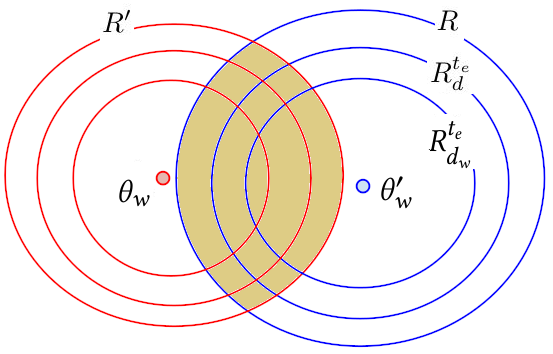}
  \vspace{-6mm} 
\end{figure}

\Cref{fig:feasible-region} illustrates this computation
diagramatically.

\btitle{Encounter selection algorithm.}  The encounter selection
algorithm, denoted $\api{encs\_select}(\device, t,
\trustanchors)$, selects encounters of $\device$ with distinct
non-suspicious devices in the vicinity of time $t$ and location
$\locc$. {\sys} first runs the TrustRank algorithm described in
\S\ref{sec:dev-trust} seeded with the trust anchors {\trustanchors} to
mark suspicious-looking devices. ({\sys} caches results of
TrustRank. All subsequent calls with the same {\trustanchors} and, in
particular, all subsequent calls from the same verifier reuse this
result without recomputing TrustRank.)

Next, we find all encounters of $\device$ in a large time window $[t -
  a, t + a]$ centered at $t$ ($a$ is fixed to \eg 5 mins). We remove
encounters with suspicious-looking peers and sort the remaining
encounters \emph{ascending} by the following function $f$. The sorted
list of encounters is the output of $\api{encs\_select}(\device, t,
\trustanchors)$.
\vspace{-2mm}
\[
f=|t_e-t|+\min(|t_e-t_w|,|t_e-t_w'|)+\frac{\min(|\locc-\locc_w|,|\locc-\locc_w'|)}{v}
\]
Here, $\locc$ is the self-reported location of the device $\device$ at
time $t$, $t_e$ is the time of the encounter, $\locc_w$ and $\locc_w'$
are the reported locations of the encounter peer preceding and
following the encounter, $t_w$ and $t_w'$ are the respective
timestamps of these location reports, and $v$ is the average travel
speed in the vicinity of $\locc$ according to the map service. The
function $f$ embodies the fact that an encounter constrains $\device$
more if: (i) the time between $t$ and the encounter is less (term
$|t_e-t|$), (ii) the encounter peer has reported a location closer in
time to the encounter (term $\min(|t_e - t_w|, |t_e - t_w'|)$), and
(iii) the reported peer location is close to $\device$'s actual
location $\locc$ (term $\min(|\locc-\locc_w|, |\locc-\locc_w'|)$).

\btitle{Precision.}  The smallest region $S$ that yields a successful
proof depends on four factors: (1) $N$: Higher values of $N$ provide
more collusion resistance but prevent a successful proof for small
$S$, (2) Local encounter density: If the proving device has many
encounters in the spatio-temporal vicinity of the proof location, then
the precision is likely to be better, (3) $r_{BLE}$: In places where
BLE encounters are limited to small distances due to obstacles and
noise, precision will be worse, and (4) Location reporting
frequency: Precision is better when the proving device's peers report
locations frequently.

In practice, $r_{BLE}$ is determined empirically by measuring the
effective BLE range in different conditions and choosing an upper
bound. Devices should be programmed to report locations at a fixed
interval, also determined empirically.

\btitle{Peer privacy.} The success of a location proof reveals one bit
of information about the witnesses' previous and next locations to the
prover and the verifier. To prevent the leakage of witness
trajectories using repeated proof calls, {\sys} rate-limits calls to
its API from every prover.

\section{Defending Against Retroactive Collusion Attacks}
\label{sec:attacks-sys}
\label{sec:dev-trust}

An adversary's goal in attacking a LPS is to prove that a device it
controls was at a target space-time location where it was actually not
present, e.g., to fraudulently prove that they eye-witnessed an
event. An attack is called \emph{premeditated} if the adversary knows
ahead of time where it will need a location proof later. To mount a
premeditated attack, the adversary only needs a few real-looking
colluding devices that -- correctly or falsely -- report presence at
the target location and encounters with the target device around that
location. Defending such attacks fundamentally requires additional
infrastructure, as has been proposed in prior
work (see \cref{sec:related}). However, premeditated attacks are less relevant
for use cases like citizen journalism, where the focus is often on
unexpected events whose location and time the adversary cannot predict
in advance.

Consequently, we focus on defending against the complementary class of
\emph{retroactive} attacks where the adversary cannot predict the
space-time location of interest in advance. Mounting such an attack
successfully requires continuous presence of adversarial devices at
all locations of potential interest. In turn, this requires the
adversary to control or simulate a very large number of devices.

\btitle{Adversarial devices.}  To mount a retroactive attack, the
adversary may deploy three kinds of (adversarial) devices.
(i) \textit{Corrupt} devices are real devices that the adversary
controls at least partially. For a subset, it may even control
physical whereabouts.
(ii) \textit{Sybil} devices, or Sybils, are virtual devices hosted on
corrupt devices (e.g., rotating IDs on a phone).
(iii) \emph{Fictitious} devices, have no correspondence to an actual
device. They are make pretend devices that report like real devices.
The term \emph{virtual} device includes both Sybil and fictitious
devices. (In contrast, we use the term \textit{honest} device for any device
that is not controlled by the adversary.)

All types of adversarial devices can report false location trajectories and
false encounters with other adversarial devices to enable
(retroactive) attacks. Additionally, corrupt and Sybil devices can
form real encounters with honest devices, but only along their real
physical trajectories (the honest peer will not confirm a false
encounter). Fictitious devices cannot form any encounters
with honest devices.

\btitle{Assumptions.}
Corrupt devices cost the adversary effort and money, so we assume that
their number is limited. Virtual devices cost nothing, so we do not
make any assumption on their number. Consequently, a successful
defense against retroactive attacks must identify most virtual
adversarial devices. The defense we present here does exactly
this.

\btitle{Key insight.}  \sys's defense relies on the insight that, even
though the adversary can fabricate any number of virtual devices with
arbitrary trajectories and encounters among themselves, any encounters
between this virtual world and the world of honest devices are limited
to corrupt and Sybil devices. Consider a graph of device encounters
(\cref{fig:encountergraph}).  In such a graph, the subgraphs
representing the virtual devices and honest devices form clusters that
are connected only by encounters involving corrupt and Sybil
devices. Since we assume that the adversary is limited to a small
number of corrupt devices and our defense limits the utility of Sybils
(as explained later), these encounters will form a small cut
between the adversarial and honest devices in the graph.

To arrive at a defense, we adapt the literature on random-walk-based
approaches to detect virtual
devices~\cite{DBLP:conf/dsn/JiaWG17,DBLP:conf/sp/AlvisiCELP13} and ban
them from quorums.  We use a particular random walk, namely, the
TrustRank algorithm~\cite{trustrank}.  We modify TrustRank to prevent
Sybils from artificially increasing the cut between the real and
virtual worlds. This allows us to recognize most fictitious devices
immediately (i.e., our defense has high recall). Furthermore, our
approach limits the number of Sybils an attacker can maintain without
detection.
In combination with a large enough $N$, this allows \sys\ to mitigate
retroactive attacks effectively.
While we did not design our defense for attacks where the adversary's
goal is to reduce the precision of location proofs of \textit{honest}
devices, our defense does, in fact, prevent such attacks.

\subsection{\sys's Defense}
\label{sec:trustrank}

\btitle{Recap of TrustRank.}  The original TrustRank
  algorithm~\cite{trustrank} takes as input a directed graph with
  positive edge weights, and an initial set of trusted nodes (which
  we denoted {\trustanchors} in \cref{sec:service}). It computes the
  answer to the following question: If (a) a random walk on the graph
  begins at a random trusted node, (b) at each step choosing to
  continue with a probability $\alpha$, by traversing an outgoing edge
  with probability proportional to the edge's weight, then, what is
  the probability that it will end at any given node? This ending
  probability assigned to each node is the node's TrustRank
  score. Naturally, nodes connected to trusted nodes only via
  low-weight cuts have lower scores compared to nodes connected to
  trusted nodes via high-weight cuts. So, TrustRank distinguishes
  these two kinds of nodes from each other.

\btitle{Our use of TrustRank.} We apply TrustRank to the
  \emph{encounter graph} whose nodes are devices and which has a
  directed edge from device $g$ to $d$ if $g$ received at least one
  BLE advert from $d$. (Note that edges are oriented opposite to the
  direction of adverts.) We seed TrustRank with the set of trusted
  devices {\trustanchors} provided by the verifier. We first explain
  how we use this setup for our defense assuming there are no Sybil
  devices (the only virtual devices are fictitious), and then explain
  how we handle Sybil devices.

In the absence of Sybils, the width of the cut between the
  honest and fictitious devices is low since it is limited by the
  number of corrupt devices, which is assumed to be low. Hence, in the
  absence of Sybils, we can run the TrustRank algorithm described
  above with the edge weight from $g$ to $d$ set to the number of
  $d$'s adverts received by $g$. This results in TrustRank scores with
  a bimodal distribution, with fictitious devices receiving
  significantly lower scores than real devices.

  To actually classify devices as real or fictitious, we automatically
  determine a threshold that separates the two modes, and classify all
  devices with scores above the threshold as non-suspicious (likely
  real) and those below as suspicious (likely virtual).  The threshold
  is determined by maximizing the sum of the percentages of true
  positives (real devices classified as non-suspicious) and true
  negatives (fictitious devices classified as suspicious) on a
  synthetic encounter graph that contains data from a set of devices
  which have been verified offline as being real (or not), and a set
  of synthetically generated attacker devices for different synthetic
  attack scenarios described in \cref{sec:evaltrustrank}. We show in
  \cref{sec:evaltrustrank} that honest and fictitious devices are so
  well-separated by TrustRank scores on a real dataset that simulating
  a few attacks is actually sufficient for determining this threshold,
  even when the attack(s) occurring in reality are unknown.

\btitle{Handling Sybils.} Sybils can defeat the defense
  described so far: A small number of corrupt devices can
  broadcast the identifiers of any number of Sybil devices in repeated
  succession, thus increasing the cut-width between the real and
  fictitious devices to any extent.  To defeat such attacks, we use a
  novel edge-weighting scheme that strongly penalizes large numbers of
  Sybils. We rely on the insight that Sybils on a single host appear
  to form concurrent encounters with nearby devices in such
  attacks. Our edge-weighting scheme heavily penalizes the
  contribution of concurrent encounters to edge-weights.

For each device, we divide the time series of received BLE adverts into
epochs of fixed duration (e.g., 8 min). We call two adverts received
by a device \emph{concurrent} if they are in the same epoch. Next, we compute
the weight of the edge from device $g$ to device $d$, written
$e(g,d)$, as follows: For each epoch in which $g$ received an advert
from $d$, we add to $e(g,d)$ 1 divided by the number of devices from
which $g$ received concurrent adverts raised to a power $L$, for a
fixed $L > 1$.%
\footnote{%
This edge-weighting scheme is a slight simplification of our actual
scheme, which uses continuous time and an integral instead of discrete
epochs and a sum over epochs. Continuous time gets rid of
epoch-boundary effects. The actual scheme is explained in \iftr
\cref{sec:edge-weighting} \else our (anonymized) TR~\cite{proloc-tr}\fi.}
\vspace{-1mm}
\[
e(g,d)
\hspace{-1mm}
\mathrel{\triangleq}
\hspace{-6mm}
\sum_{E \, \in\, \mbox{epochs}}
\hspace{-1mm}
\frac{
  \textrm{if }(g\textrm{ recd. advert from }d\textrm{ in }E)\textrm{ then }1\textrm{ else }0
}{
  \left(
  \textrm{\# devices from which }g\textrm{ recd.\ adverts in $E$}
  \right)^L
}
\]

To understand the effectiveness of the method, suppose that a corrupt
device with $m$ Sybils encounters a device $g$ in an epoch where $g$
receives adverts from $t$ other devices. Then, each Sybil ends up
getting an incoming edge weight $1/(t+m)^L$ from $g$ in this epoch and
the $m$ Sybils together get an incoming edge weight of $m/(t+m)^L$
from $g$ in this epoch. When $L > 1$, this function tends to $0$ very
quickly for large $m$.  Hence, by creating a large number of Sybils
$m$, the adversary lowers the incoming edge weights its Sybils receive
from $g$ in the epoch, relative to the incoming edge weights received
from $g$ by other devices in epochs where $g$ encountered honest
devices only. Since the total TrustRank score that the Sybils -- and,
hence, the entire virtual world -- gets is proportional to the Sybils'
total incoming edge weight relative to the total incoming edge weight
of all devices encountered by $g$ over time, it is in the adversary's
interest to pick small $m$, i.e., to limit the scale of its Sybil
attack.

We show experimentally in \cref{sec:evaltrustrank} that by setting
$L=3$, we can detect nearly all fictitious devices even in the presence
of strong adversaries that corrupt up to 0.5\% of real devices.

\btitle{End-to-end defense.} This leaves the adversary with only
corrupt and Sybil devices at its disposal. Corrupt devices are limited
in number by assumption. As for Sybils, the ideal strategy for an
attacker in the face of our edge-weighting scheme is to rotate them at
the granularity of epochs, in a way that they all get roughly equal
TrustRank scores. If the average TrustRank score of real devices
(including corrupt devices) is $s$ and the TrustRank
threshold determined above is $T$, then the adversary can sustain up
to $s/T$ Sybils per corrupt device with this strategy. In
\cref{sec:evaltrustrank}, we show that this ratio is around 10.6 for a
real dataset. With this, if the adversary can corrupt $c$ devices, it
ends up with $cs/T$ adversarial devices. \iftr\else(\fi In
  fact, we prove \iftr \else in our anonymized TR~\cite{proloc-tr} \fi that the total TrustRank
  across all adversarial devices is bounded by a constant factor of
  the sum of the TrustRanks corrupt devices can collect\iftr\else.)\fi
  \iftr:

\begin{theorem}
The sum of the TrustRank of attacker devices is at most proportional to the sum of the TrustRank of corrupt devices. 
\end{theorem}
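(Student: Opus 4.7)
The plan is to decompose the attacker set $A = T_c \cup T_s \cup T_f$ into a ``boundary'' $B := T_c \cup T_s$ and an ``interior'' $T_f$, then propagate bounds in two stages through the TrustRank fixed point $\pi = (1-\alpha)\, s + \alpha\, \pi P$, where $s$ is supported on trust anchors in $H$. The key structural fact driving the whole argument is that fictitious devices never broadcast real adverts, so the encounter graph contains no edges from $H$ to $T_f$, i.e.\ $P_{H, T_f} = 0$; any walk that ever reaches $T_f$ must first pass through $B$.

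First I would bound $\pi(T_f)$ in terms of $\pi(B)$. Restricting the fixed point to $T_f$ and using $P_{H,T_f}=0$ gives $\pi_{T_f} = \alpha\, \pi_B P_{B, T_f} + \alpha\, \pi_{T_f} P_{T_f, T_f}$. Summing over $T_f$ and using that $P$ is substochastic (so even the adversary's worst choices satisfy $P_{B, T_f} \mathbf{1} \leq \mathbf{1}$ and $P_{T_f, T_f} \mathbf{1} \leq \mathbf{1}$) yields $\pi(T_f) \leq \frac{\alpha}{1-\alpha}\, \pi(B)$, and hence $\pi(A) \leq \frac{1}{1-\alpha}\, \pi(B)$.

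Next I would bound $\pi(B) = \pi(T_c) + \pi(T_s)$ in terms of $\pi(T_c)$. The same kind of worst-case flow argument applied at the $H$--$A$ interface (assuming the adversary keeps all $A$-mass inside $A$, since re-exporting mass to $H$ only hurts the adversary) gives $\pi(A) \leq \frac{\alpha}{1-\alpha} \sum_{g \in H} \pi(g)\, P(g, A)$, while restricting the fixed point to $T_c$ gives the matching lower bound $\pi(T_c) \geq \alpha \sum_{g \in H} \pi(g)\, P(g, T_c)$. It therefore suffices to show that, for every honest $g$ and some constant $\kappa = \kappa(L)$, $\sum_{\sigma \in T_s} e(g, \sigma) \leq \kappa \sum_{c \in T_c} e(g, c)$. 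This is where the edge-weighting scheme pays off: every epoch $E$ in which $g$ receives a Sybil's advert must also contain that Sybil's corrupt host, and the per-epoch contributions to $e(g, \sigma)$ and $e(g, c)$ are both equal to $1/n(E)^L$, so Sybil weight comes only at the cost of equal corrupt weight. Combining the two stages then gives $\pi(A) \leq \frac{1+\kappa}{1-\alpha}\, \pi(T_c)$, the desired constant-factor bound.

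The hardest step is the second: per epoch, the Sybil-to-corrupt edge-weight ratio can be as large as the number $m$ of concurrent Sybils, which is in principle unbounded, so no purely local bound works. The resolution must exploit globally that $m/n(E)^L$ decays superlinearly in $m$ for $L>1$. I would argue that the adversary's payoff-maximizing Sybil strategy coincides with the equal-TrustRank rotation strategy informally described in \S\ref{sec:trustrank}, and that even under that optimum the aggregate ratio is bounded by a constant depending only on $L$ (and on the detection threshold parameters already fixed by the algorithm); any non-optimal strategy produces a smaller total Sybil TrustRank and hence the bound still holds. Tracking this constant carefully, most naturally by amortizing the Sybil edge weight across all epochs of each corrupt host, is the piece I expect the technical report to spell out in full.
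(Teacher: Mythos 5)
There is a genuine gap, and it sits exactly where you defer to "the piece I expect the technical report to spell out." Your stage~1 is fine: the fixed-point computation $\pi_{T_f} = \alpha\,\pi_B P_{B,T_f} + \alpha\,\pi_{T_f}P_{T_f,T_f}$ is just a linear-algebraic rendering of the paper's actual argument, which is a two-line probabilistic one: any walk that halts on an attacker device must first step across the honest/attacker cut, and since a walker that steps onto a device halts there with probability at least $1-\alpha$, the total mass entering the attacker world is at most $\tfrac{1}{1-\alpha}$ times the TrustRank of the cut devices. The difference is which devices you take the cut to be. The paper asserts that ``the only attacker devices that encounter real devices are corrupt devices,'' i.e.\ it attributes all honest-to-attacker edges to the corrupt devices themselves, and is then immediately done with constant $\tfrac{1}{1-\alpha}$; it never attempts a Sybil-versus-corrupt mass comparison inside the theorem. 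You instead take the boundary to be $T_c \cup T_s$ and then need stage~2, $\pi(T_s) \le \kappa(L)\,\pi(T_c)$, to finish.

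That stage~2 lemma is not proved and, as stated, is not true. Your own closing paragraph concedes the local version fails: in an epoch where $g$ hears $m$ concurrent Sybil IDs plus their corrupt host, $\sum_{\sigma} e(g,\sigma) = m/n(E)^L$ while $e(g,c) = 1/n(E)^L$, so the ratio is $m$, unbounded; the exponent $L>1$ damps the \emph{absolute} Sybil weight, not the Sybil-to-corrupt \emph{ratio}. Worse, a corrupt host that simply stops broadcasting its own ID makes $\sum_c e(g,c)=0$ while the Sybil weights stay positive, so no constant $\kappa$ depending only on $L$ can exist without additional modeling assumptions. The proposed rescue --- arguing the adversary's optimal strategy is the epoch-rotation strategy and bounding the aggregate ratio there --- is not carried out, and the quantity it would have to produce is essentially the paper's $s/T$ budget, which is an empirical, dataset-dependent number ($\approx 10.6$), not a constant derived from $L$; the paper handles the Sybil budget only in that quantitative, outside-the-theorem discussion. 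So your proposal proves $\pi(A) \le \tfrac{1}{1-\alpha}\,\pi(T_c \cup T_s)$ but not the claimed bound against $\pi(T_c)$ alone; closing that last step requires either adopting the paper's modeling premise (honest devices' edges into the attacker world count as edges to the corrupt hosts) or a genuinely new argument that your sketch does not supply.
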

\begin{proof}
  TrustRank of a device can be characterized as the probability that a random walker halts on the device, given:
  \begin{itemize}
    \item The walker begins on a (uniformly random) trusted device.
    \item At each step, the walker proceeds with some probability $\alpha$, and halts with probability $1-\alpha$.
    \item At each step, if the walker proceeds, it steps to a neighboring device with probability proportional to edge weight (defined above).
  \end{itemize}
  Since trusted devices are real, and the only attacker devices that encounter real devices are corrupt devices, a random walker that halts on an attacker device must have (at some point) stepped to a corrupt device.
  \\
  Therefore the probability that a random walker halts on an attacker device is, at most, the probability that a random walker steps to any corrupt device. 

  \paragraph{TrustRank is bounded by a constant factor of step probability.}
  A  random walker that steps to a device has at least a $1-\alpha$ probability of halting on a that device. 
  Therefore the TrustRank of a device is at least $1-\alpha$ (which is a constant) times the probability that a random walker steps to that device. 

  Therefore, the probability that a random walker steps to any corrupt device is at most proportional to the sum of the TrustRank of corrupt devices.
  This in turn implies that the sum of the TrustRank of attacker devices is at most proportional to the sum of the TrustRank of corrupt devices. 
\end{proof}
\fi

To limit how successfully an adversary can launch a retroactive 
attack, \sys\ additionally relies on the parameter $N$. With $cs/T$
colluding devices as described above, the adversary can hold a quorum
of $N$ adversarial devices at no more than $cs/(T\cdot N)$ locations
simultaneously. Just by way of example, an adversary that corrupts
$c=5$ devices, setting $N = 10$, with $s/T = 10.6$ as for our dataset,
the adversary can cover only four locations at a time. To be
successful in a retroactive attack, the adversary would have to
correctly narrow its guess of the locations that could be relevant for
an attack in the future to just a set of four.

\btitle{Possible extension to premeditated attacks.}
\label{sec:premed}  
Recall that in a premeditated attack, the adversary may move a
  quorum of adversarial devices to the target location on
  demand. Here, the small number of adversarial devices permitted by
  \sys's defense are sufficient for such an attack.  While beyond the
scope of this paper, the key to a defense against premeditated attacks
that does not require infrastructure is to prevent the adversary from
moving Sybils from the actual locations of the corrupt devices (which
it cannot easily choose) to the target location, while maintaining
their trust levels. One way to accomplish this is to diminish trust
whenever a device moves in space-time, another is to tie trust to
certain spatial zones of validity. A full design and evaluation of
such a defense remains future work.

\section{Datasets}
\label{sec:data}

\sys\ takes as input the device history of each participating
device. The device history is a time series of locations and BLE
receptions from nearby devices (\cref{sec:background}).
In this section, we introduce the datasets we use in the
experimental evaluation of~\cref{sec:eval}: a simulated dataset
and a real-world dataset.

\btitle{Simulated dataset.}
We use a mobility simulator from prior work \cite{pancastnature}
together with OpenStreetMap data \cite{openstreetmap} to synthesize
device locations and encounters in Co\-pen\-ha\-gen  over a 15-day period. 
This involves three main steps: (i) we use the simulator to create a list of
(hypothetical) users living in various Copenhagen
neighborhoods, in accordance with Meta's publicly
available data on population density, household, and age demographics
\cite{dataforgoodmeta}; (ii) we use OpenStreetMap data to identify places of
interest (POIs) on the map, labeled according to their location type
(e.g., supermarkets, schools, train stations); (iii) we use the
simulator and Meta's publicly available mobility models (which provide
the frequency of visits to each location type for different age
groups) to synthesize individuals' mobility traces as a series of
visits to POIs. A visit is a four-tuple consisting of a user device
ID, a location or POI, a visit start time, and a visit end time. A
user selects among POIs of the same location type with frequency
inversely proportional to the distance between the user's home and the
respective POIs---the so-called gravity
model~\cite{haynes2020gravity}.

\textit{Location dataset.} While at a POI, a user device records its
location with a given average frequency, which we vary in our
experiments from 1 to 5 minutes. Each record is a triple (device ID,
location, timestamp). Devices do not record locations at home and in
transit; this limited recording of locations is conservative as it
makes our evaluation results worse than they would be had user devices
continuously recorded locations. All the location records of a device
constitute its location history.

\textit{Encounters.} We cannot simulate individual BLE adverts
because successful receipt of a BLE transmission is highly dependent
on the physical environment, which is not covered in the mobility
models we use. Instead, we adopt a conservative approach: we assume
that if two devices remain within $r_{BLE} = 50$\,m of each other for
at least 5 minutes, then each of them will receive at least one BLE
advert from the other, thereby establishing a mutual encounter. Our
encounter dataset is the list of all such encounters represented as
triples (device ID$_1$, device ID$_2$, timestamp), where timestamp is
the midpoint of the time period during which the devices were in
proximity.

In practice, not everyone will adopt {\sys}, so we generate mobility
(location and BLE receipt) traces for only a chosen percentage of
randomly chosen users (the \emph{adoption rate}).

As an example, for an adoption rate of $20\%$ and an average location
reporting frequency of $3$ minutes, our dataset has 128,882 unique
participating devices (users), which together produce 72,905,340
location reports and 19,343,782 pairwise encounters. Over 15 days,
each device participates in 150 encounters on average (min: 0, median:
106, max: 3,000) and provides 565.67 location reports on average (min:
33, median: 543, max: 2,670).

\btitle{Real-world dataset.}
As mentioned above, we cannot simulate individual BLE adverts. Thus, to evaluate
TrustRank, which relies on individual adverts, we
use the SensibleDTU~\cite{dtudatasets} dataset of BLE adverts 
from Danmarks Tekniske Universitet
(DTU), Copenhagen.%
\footnote{Data collection, anonymi\-zation, and storage were
approved by the Danish Data Protection Agency. One of the authors had
access to the dataset under a collaboration agreement with DTU.}
The data was collected from March 2013 to August 2016 by DTU students
carrying LG Nexus 4 phones, which supported Bluetooth 4.0 with A2DP.
These devices collected BLE adverts
using the OpenSensing~\cite{opensensing} app for Android 4.2.2.
\iftr In particular, the phones ran the data collector
in the funf-v3 repo.  The phones broadcasted unique device ids every
200ms, and processed all received adverts in batches.  Phones
periodically requested a new location from their location service and
recorded them.
\fi

A total of 850 phones reported a total of 40.8M
adverts received from other participating devices. 
Each entry includes the device ids of receiver and sender, and a
timestamp. Each device recorded on average 48k
adverts (min: 7, median: 38.4k
max: 247.5k).
Active devices wake up every 5 minutes and process any received BLE
adverts, time\-stamping them at the moment of processing.
Consequently, the distribution of time intervals between recorded
batches for each device is very tightly clustered around 5min, with
the 75th and 95th percentile at 5min and 20min, respectively.

\section{Evaluation}
\label{sec:eval}

In this section, we present results of our experimental evaluation. We
ran {\sys} on an Intel(R) Xeon(R) CPU E7-8857 v2 @ 3.00GHz
  machine (4 sockets, 12 cores/socket, 1 thread/core) with 1.48TB RAM,
  running Debian 11.  TrustRank and the location proof algorithm are
implemented in Python. Location proofs use Open\-Street\-Map graphs
via the OSMnx 1.1.2 package \cite{osmnx}.  Next, we show how location
proof precision is affected by the encounter density and $N$. In
\cref{sec:evaltrustrank}, we show the effectiveness of \sys's defense
against retroactive collusion attacks.

\subsection{Location Proof Precision}
\label{sec:eval-feasible-regions}
Our objective is to understand how the location proof precision varies
with the number of required verifiers $N$ (higher $N$ should reduce
precision), the average location reporting frequency of devices
(higher frequency should improve precision) and the temporal encounter
density at the location of the proof (higher encounter density should
improve precision). A fourth factor, the {\sys} adoption rate, affects
precision indirectly by shifting the distribution of temporal
encounter densities: a higher adoption rate shifts the distribution
towards higher encounter densities, and, hence, should result in
improved precision on average.

We report on experiments with our simulated dataset, which is
representative of individuals' mobility and encounters in an actual
city. In all experiments, we ask for proofs that a device was within a
circular region with a given center (which coincides with the location
the device was visiting per our simulation) and a given radius $R$,
which represents the precision. For each proof, encounter selection is
set to find all encounters in a window of 10 minutes ($a = 5$ in the
encounter selection algorithm). Our graphs show the minimum radius $R$
(the highest precision) for which we could find a proof.

\btitle{Effect of $N$.}  \cref{fig:simulation-results} shows the
10th-, 50-th, and 90th-percentiles of the attainable precision radius
$R$ as a function of $N$ in several different experimental settings
(varying the average location reporting frequency, the encounter
density and the adoption rate). The percentiles are estimated by
running the algorithm of \cref{sec:service} on 5,000 randomly sampled
visits in the respective experimental settings.

In all experimental settings, $R$ increases with $N$, showing (as
expected) that the precision reduces as the required number of
independent verifiers increases.

\btitle{Effect of the average location reporting frequency.}
Next, we explore the effect of the average location reporting
frequency on the relation between $R$ and $N$. Each graph of
~\cref{fig:simulation-results-loc-freq} shows the 10th-, 50th- and
90th percentiles of $R$ as a function of $N$ for a different average
location reporting frequency -- 1 min, 3 min or 5 min. The adoption
rate is fixed at 20\% across the three graphs, and samples are drawn
from visits with encounter densities close to the median density for
this adoption rate (11 encounters in 10 minutes).

The distribution of precision radii (for any given $N$) shifts to the
higher side as the frequency of location reports increases from 1 to 5
minutes. This is also expected because, when peer location reports
bracketing an encounter are far apart in time, the location proof
algorithm of \cref{sec:service} results in larger isochrones.

As an example, for an average location reporting frequency of 3 min,
which is a reasonable trade-off between conserving device battery and
proof precision, the median of the attainable precision varies from $R
= 75$\,m for $N = 1$ to $R = 500$\,m for $N=5$. This shows that proofs
of reasonable precision are obtainable even with a 20\% adoption and
location reports as far apart as 3 min on average.

\btitle{Effect of encounter density.}
Next, we consider the effect of encounter density. For this
experiment, we fix the adoption rate and average location reporting
frequency at 20\% and 3 min, respectively. The three graphs of
~\cref{fig:simulation-results-encounter-density} show the percentiles
of $R$ for samples drawn from visits with different encounter
densities: 20, 40 and 50 encounters in 10 mins.
As expected, for any given $N$ precision radii are smaller (better
precision) when the encounter density is higher.

\btitle{Effect of adoption rate.}
Finally, we explore the effect of varying adoption rate on the
relation between $R$ and $N$. The three graphs of
~\cref{fig:simulation-results-adoption-rates} show the relation
between $R$ and $N$ for three different {\sys} adoption rates -- 10\%,
20\% and 40\%. The average location reporting frequency is fixed at 3
min, and the encounter density is set to the median encounter density
for the respective adoption rate. We observe that for a fixed $N$, $R$
reduces as the adoption rate increases. This is in line with what we
expect -- as the adoption rate increases from 10\% to 40\%, the median
encounter densities increase from 9 encounters in 10 minutes to 14
encounters in 10 minutes and the distribution of $R$ shifts downwards
(higher precision) accordingly.

\btitle{Summary.}  From these experiments, we conclude that
constructing location proofs with a precision of hundreds or even tens
of meters is possible for reasonable values of $N$ (upto 10),
reasonable location reporting frequencies (3 minutes on average), and
moderate adoption rates (20\%).

\begin{figure*}[h]
  \newcounter{figbackup}
  \setcounter{figbackup}{\value{figure}}
  \caption{Precision radius ($R$) as a function of the
    number of independent witnesses ($N$) for different
    location reporting frequencies, encounter densities
    and adoption rates.
    [Legend: \textcolor{blue}{10-th percentile}, \textcolor{red}{median}, \textcolor{green}{90-th percentile}].}
  \label{fig:simulation-results}
  \setcounter{figure}{\value{figbackup}}
  
  \begin{subfigure}{\textwidth}
    \caption{Varying average location reporting frequencies: 1, 3 and
      5 mins. Adoption rate fixed at 20\%. Points in each graph
      sampled from visits with encounter densities close to the median
      for this adoption rate, which is 11 encounters in 10 minutes.}
    \label{fig:simulation-results-loc-freq}
    \begin{tabular}{@{}ccc@{}}
      \includegraphics[width=0.32\textwidth]{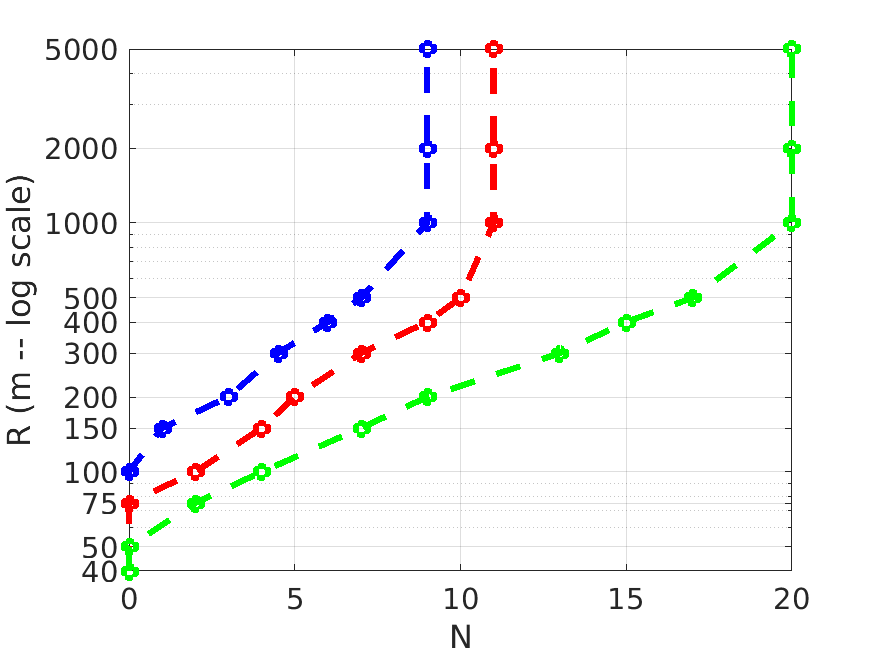}
      &
      \includegraphics[width=0.32\textwidth]{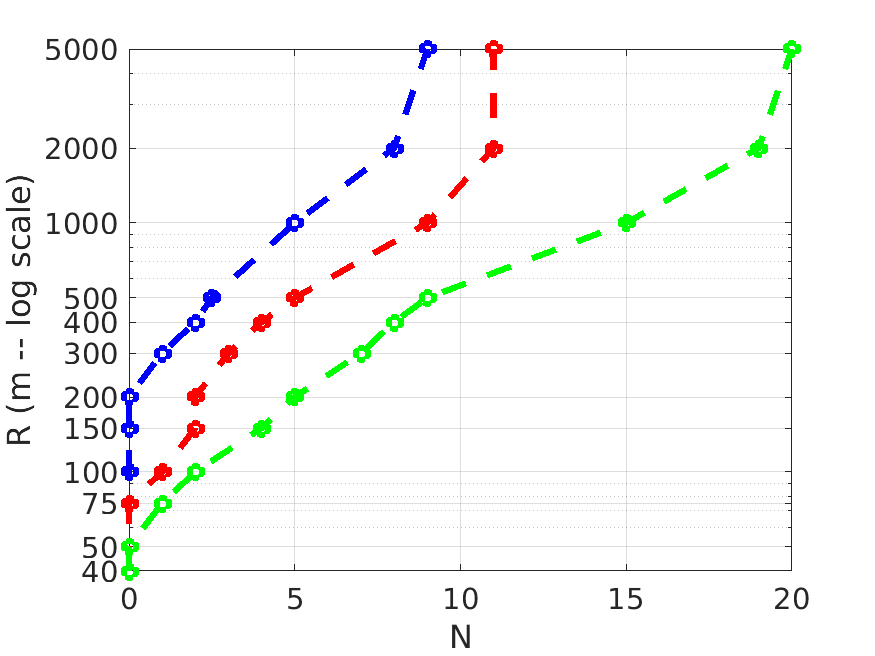}
      &
      \includegraphics[width=0.32\textwidth]{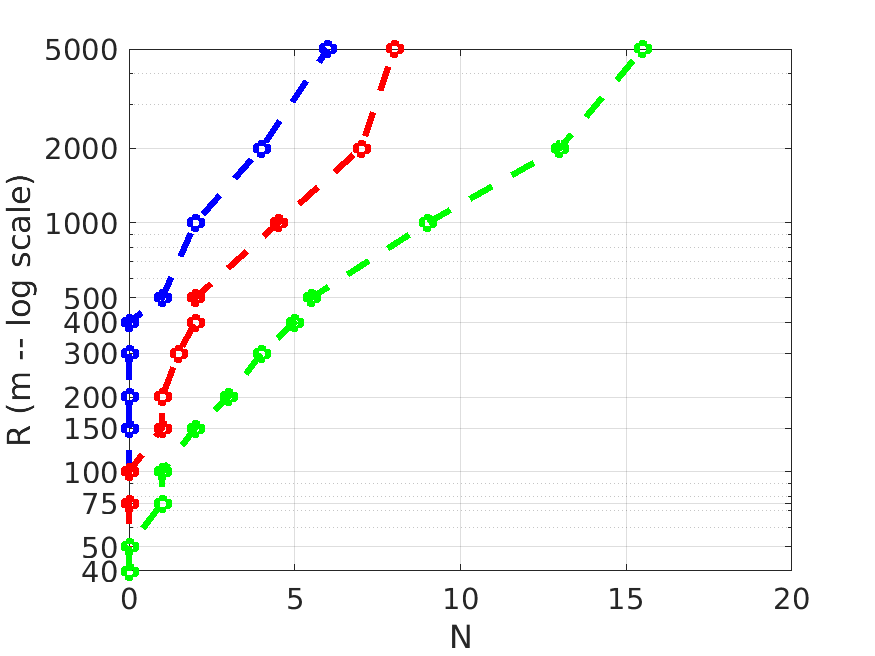}
    \end{tabular}
  \end{subfigure}

  \begin{subfigure}{\textwidth}
    \caption{Varying encounter densities: 20, 40 and 50
      encounters in 10 minutes. Sampled from a simulation with
      adoption rate 20\% and average location reporting frequency
      3 minutes.}
    \label{fig:simulation-results-encounter-density}
    \begin{tabular}{@{}ccc@{}}
      \includegraphics[width=0.32\textwidth]{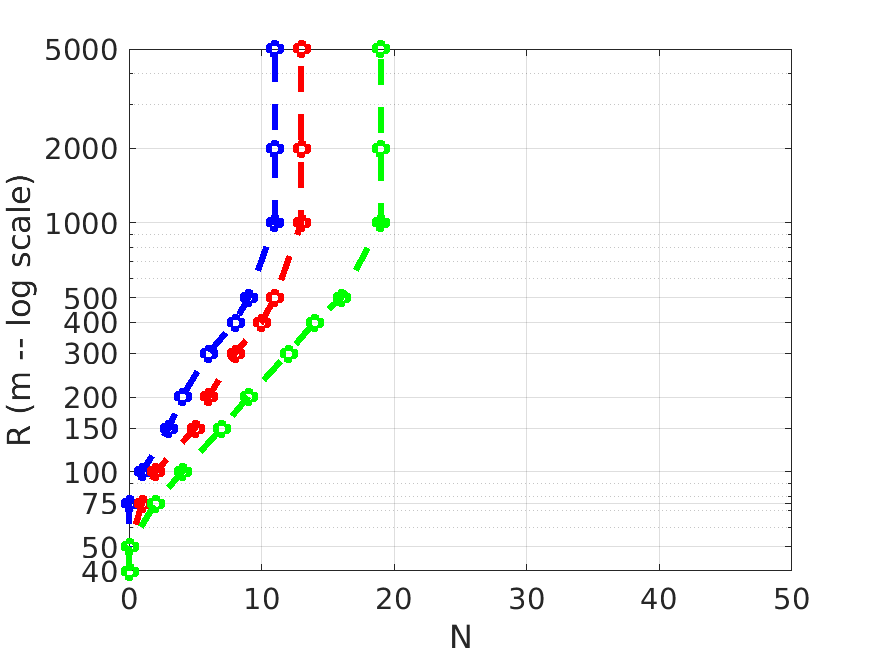}
      &
      \includegraphics[width=0.32\textwidth]{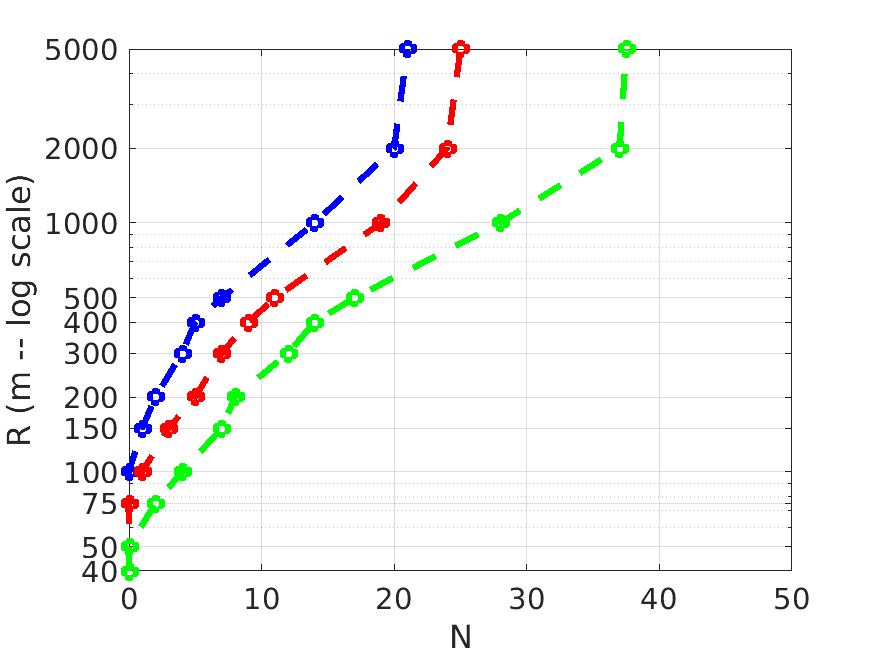}
      &
      \includegraphics[width=0.32\textwidth]{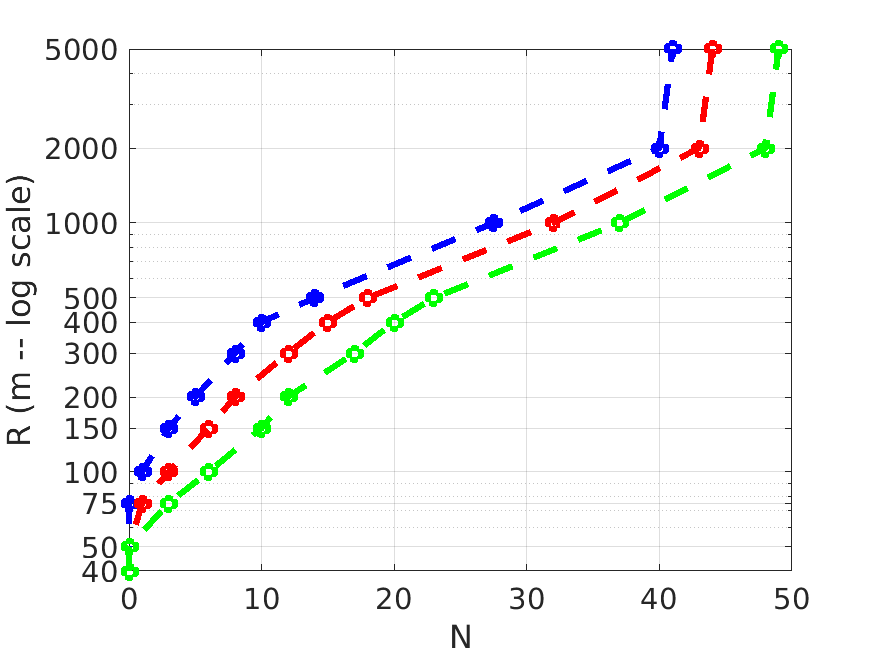}
    \end{tabular}
  \end{subfigure}

  \begin{subfigure}{\textwidth}
    \caption{Varying adoption rates: 10\%, 20\% and 40\%. Average
      location reporting frequency fixed at 3 minutes. Points in each
      graph sampled from visits with encounter densities close to the
      median for this location reporting frequency and the respective
      adoption rate (9, 11 and 14 encounters in 10 minutes).}
    \label{fig:simulation-results-adoption-rates}
    \begin{tabular}{@{}ccc@{}}
      \includegraphics[width=0.32\textwidth]{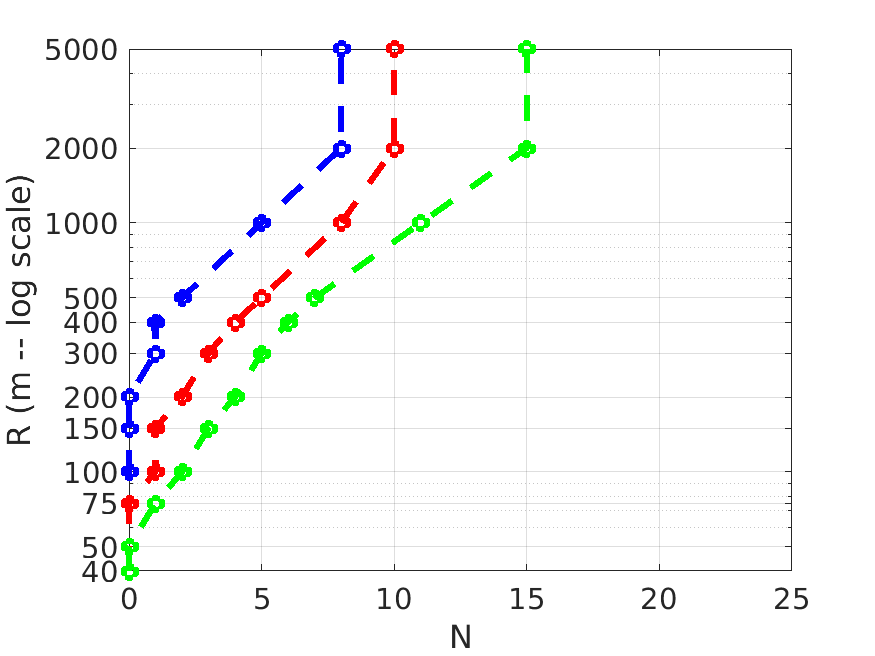}
      &
      \includegraphics[width=0.32\textwidth]{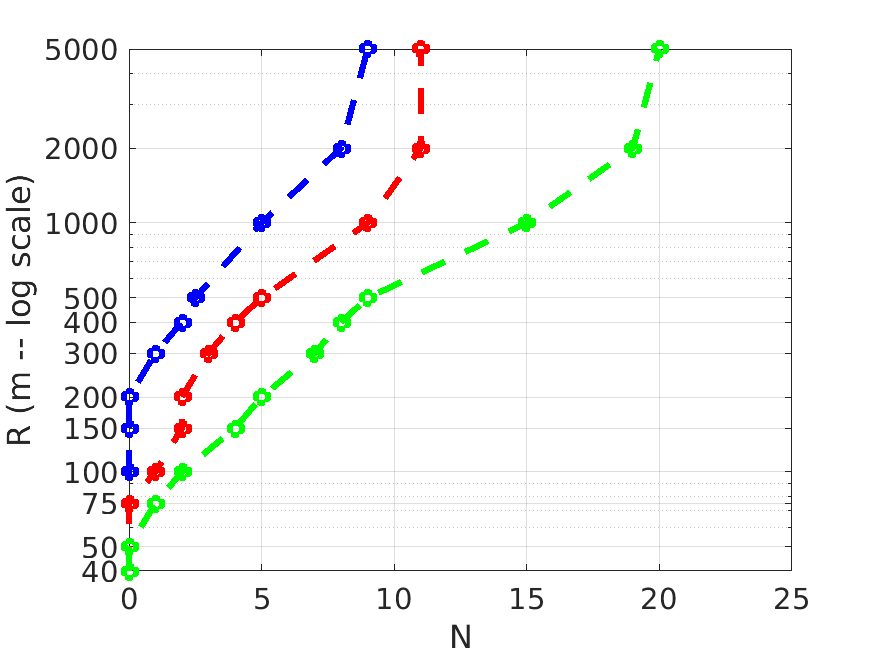}
      &
      \includegraphics[width=0.32\textwidth]{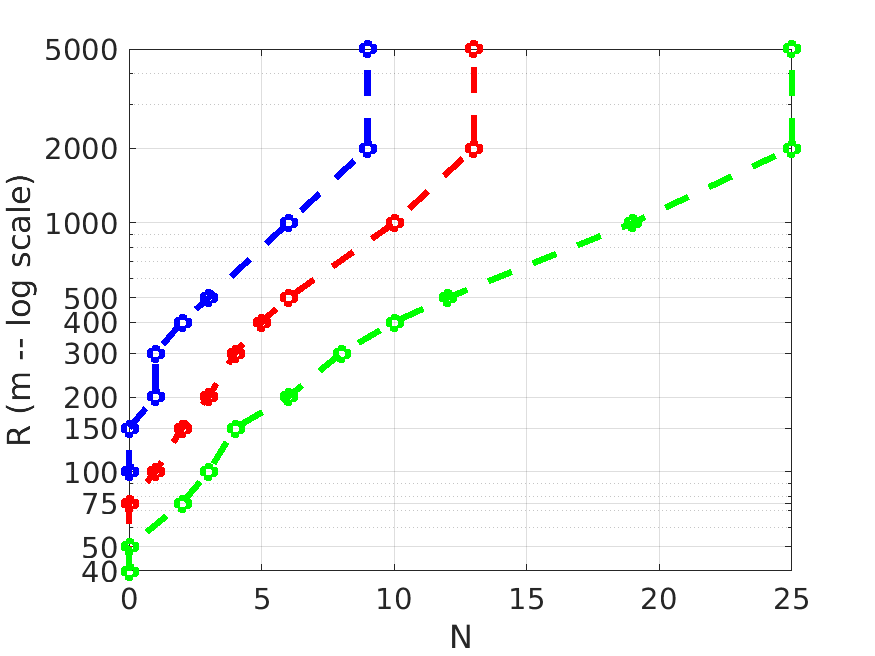}
    \end{tabular}
  \end{subfigure}
\end{figure*}

\subsection{TrustRank}
\label{sec:evaltrustrank}
Next, we evaluate the effectiveness of TrustRank in identifying
  fictitious and Sybil devices. Here, we use only the real-world
  dataset because our TrustRank algorithm relies on individual
  adverts, which are not available in our simulated dataset. Our
  real-world dataset includes only honest devices so, for this
  evaluation, we simulated several (in some ways optimally) powerful
  attacks on top of the real-world dataset.

\btitle{Simulated attacks on the real-world dataset.}  In each attack,
we first mark a number $c$ = 1, 2, 4 or 8 of randomly chosen devices
from among the 850 devices as \emph{corrupt}.  Since they are randomly
chosen from honest devices, corrupt devices' behavior is fundamentally
indistinguishable from honest behavior.

We then simulate $m$ Sybil devices on each corrupt device,
  varying $m$ from 1 to 128. We replicate every advert in the dataset
  whose source is a corrupt device to every Sybil on that device, thus
  simulating the Sybil attack of \cref{sec:attacks-sys},
  which gives the Sybils as many real encounters as possible.

Finally, for every device in our original dataset, we add an
adversary-created \emph{fictitious} device, which acts as its
doppelg\"anger.  The doppelg\"angers mimic the pairwise encounters and
reported locations of the real devices.  This is a very powerful
attacker that controls as many devices and uploads as much data as the
real world.  What's more, the behaviour of the fictitious devices is
indistinguishable from that of honest devices.  Without some trusted
devices, there is no way to know which encounter subgraph is honest,
and which is fictitious: they're symmetric.

Whenever a corrupt device receives an advert from a real device, it
also receives an advert from the corresponding doppelg\"anger. Trust
flows from the real world to the fictitious world along these bridging
adverts through corrupt devices.

Overall, we simulate strong attacks where the attacker corrupts up to 8 real devices (1\% of all devices), uses Sybils and uploads a fictitious world as large as the real world.

\btitle{TrustRank.}
We randomly mark 10 devices as trusted, run Trust\-Rank on the
encounter graphs for each of the above scenarios (we empirically picked the exponent
$L = 3$ for our edge weighting), and report our
observations. (Trust\-Rank is actually very robust to the
  initial choice of trusted devices so verifiers have a wide choice in
  the number and set of trusted devices they pick for location
  proofs. We experimented with different sets of trusted devices,
  ranging in size from 1 to 16, and the results were very similar in
  all cases.)

\begin{figure}[t]
	\centering
  \vspace{-3mm} %
  \caption{
    CDF of TrustRank values for honest and attacker devices
     (corrupt and fictitious).
    Our threshold (the vertical red line) is computed using data from
     many simulated attacks. %
  }
  \label{fig:analysis-cdf}
  \begin{tabular}{rrc}
    \rotatebox[origin=l]{90}{\hspace{0mm}fraction of devices with}\hspace{-1mm} &
    \hspace{-2mm}\rotatebox[origin=l]{90}{\hspace{1mm}at least X TrustRank}\hspace{-1mm} &
    \hspace{-2mm}\includegraphics[width=75mm]{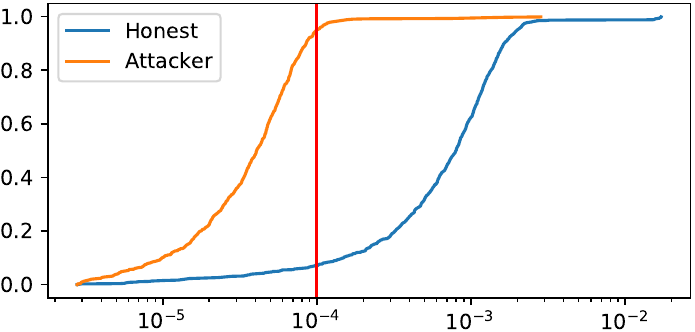}
    \vspace{-0.5mm}
    \\
    &
    &
    TrustRank
  \end{tabular}
  \vspace{-8mm}
\end{figure}

\btitle{TrustRank threshold.} We first examined the
distributions of TrustRank scores of honest and attacker (corrupt,
Sybil and fictitious) devices in different attack scenarios. As an
example, \cref{fig:analysis-cdf} shows the CDFs of the TrustRank
scores of honest and attacker devices with $c = 8$, $m = 1$. There is
a strong separation of over an order of magnitude between the
TrustRank scores of honest and attacker devices.

Next, we determine a \emph{uniform} cut-off threshold across all our
attack scenarios (the vertical red line in \cref{fig:analysis-cdf}), by
maximizing the sum of the true positive and the true negative rates
across all attack scenarios (see \cref{sec:trustrank}). For
our dataset, this optimal threshold was 0.0000995. As shown later,
this single threshold was very effective across \emph{all} our attack
scenarios. This indicates that, in practice, where the actual attack
is unknown, one could still determine a useful threshold by simulating
a limited number of possible attacks.

Above, we used ground truth about \emph{all} honest devices to
calculate the global threshold. In practice, the threshold can be
approximated by adding simulated attackers to a subset of the
  encounter data limited to a \emph{few} that are known to be honest.
To confirm that this suffices, we recomputed the threshold
on sample sub-datasets featuring only a small, 
randomly chosen subset of 25 honest 
devices (<3\% of devices) and all the attack devices. This
threshold was nearly as effective as our optimal threshold in
separating honest and adversary devices. 

\begin{table}
  \centering
\caption{TrustRank effectiveness with $c$ corrupt devices, each with $m$ Sybils.
  TP = fraction of correctly identified honest devices. TN
  = fraction of correctly identified virtual devices.}
\label{tab:trustrank-effectiveness}
\vspace{-4mm}
\begin{tabular}{|c|c|c|c|c|}
    \hline
    \multirow{2}{*}{c} & \multirow{2}{*}{m} & \multirow{2}{*}{TP} & \multicolumn{2}{c|}{TN}  \\
    \cline{4-5}
    &   &    & Sybil & Fictitious \\
    \hline
    \multirow{3}{*}{1}&1&0.929&0&1 \\
    &8&0.930&1&1 \\
    &16&0.930&1&1 \\
    \hline
    \multirow{3}{*}{2}&1&0.930&0.5&1 \\
    &8&0.931&1&1 \\
    &16&0.931&1&1 \\
    \hline
    \multirow{3}{*}{4}&1&0.929&0.25&0.998 \\
    &8&0.931&1&1 \\
    &16&0.931&1&1 \\
    \hline
    \multirow{3}{*}{8}&1&0.927&0.125&0.957 \\
    &8&0.930&1&1 \\
    &16&0.930&1&1 \\
    \hline
\end{tabular}
\vspace{-6mm}
\end{table}

\btitle{TrustRank effectiveness.}  \Cref{tab:trustrank-effectiveness}
shows that with the threshold computed above, we can correctly
identify most honest devices (column TP) and most virtual devices
(columns TN), across all our attack configurations (we do not
  show results for $m>16$ as those are nearly identical to those for
  $m=16$). The fraction of correctly identified honest devices is
consistently around 0.93, independent of the attack, again suggesting
that our selection of a global threshold is robust.

The fraction of correctly identified fictitious devices is
consistently above 0.998, even when $c = 4$ devices or nearly 0.5\% of
all real devices are corrupt. Even with $c=8$ (nearly 1\%
  corrupt devices), 0.957 of all fictitious devices can be detected.
Thus, for adversaries that are limited to a small fraction of corrupt
devices, which we assume to be the case in our defense, TrustRank is
nearly perfect at eliminating the threat of fictitious devices.

Finally, \cref{tab:trustrank-effectiveness} shows that our
edge-weighting scheme with $L = 3$ correctly discourages high numbers
of Sybils per host by harming the adversary: With 8 or more Sybils per
host, the TrustRank scores of all Sybils fall considerably, causing
all virtual devices to be detected.
This shows that our edge-weighting scheme works as intended.

As mentioned in \cref{sec:trustrank}, the ideal strategy of an
adversary is actually to activate the Sybils on a corrupt device one
at a time at epoch granularity, distributing trust as evenly as
possible among them. Assuming that the average TrustRank of a real
device is $s$ and the TrustRank threshold is $T$, the number of Sybils
an adversary can sustain per corrupt device without detection is
$s/T$. In our experiments, this ratio is close to 10.6 in all attack
scenarios, which indicates that it is not very sensitive to specific
attacks.

\subsection{Computational resources}
We ran TrustRank over the DTU dataset plus the simulated
attacks
in under 1min %
on a single, standard server-class machine.
As to the location proofs, their cost is dominated by the isochrone computation (about $N
\times 23s$). However, the current computation is not optimized: it computes each isochrone from scratch relying on off-the-shelf Python APIs, and it does not store results that can be (partially) re-used to compute future location proofs
in the same space-time region.

We believe that the computation can be considerably optimized
  for a production system, and scaled out easily. In fact, algorithms
  like TrustRank or PageRank are well-studied, and can run on enormous
  graphs (e.g., in production, TrustRank weights would be added to
each edge incrementally, as each new data point arrives).
Furthermore, the feasible region computation can be truncated
  once the feasible region grows beyond a pre-determined size (very
  large regions are not useful in practice), and the feasible region
  computation can also be trivially parallelized.

\section{Related Work}
\label{sec:related}

{\sys} differs from prior work on location proofs in two
ways. First, all prior work provides proofs of presence at a point
using interaction with fixed infrastructure or other devices
\emph{exactly} at the point of interest, which may not always
exist. {\sys} instead provides proofs of presence in a
(quantitatively sized) region using information about interaction with
other devices close to the region of interest, not at a specific
point, which is a more realistic requirement. Second, {\sys} provides
a defense against (retroactive) collusion attacks on location proofs
even when the adversary simulates virtual devices, a scenario that
prior work did not consider.

\btitle{Infrastructure-based location proofs.}
Saroiu et al.~\cite{saroiu2009enabling} articulate the importance of
location proofs as enablers for reliable location-based
services, and describe six applications. They sketch a method
to produce location proofs using WiFi access points (AP).
VeriPlace ~\cite{veriplace} also uses WiFi APs, but improves privacy by
hiding clients' identities from APs and allows users to disclose
approximate locations to services.
Hasan et al.~\cite{Hasan2011WhereHY} augment WiFi AP-based location
proofs with endorsements from nearby devices in BT range, to
raise the bar for collusion attacks among devices and APs. A similar
technique is described by Ferreira et al.~\cite{8548244}.

CREPUSCOLO~\cite{crepuscolo} relies on BT witnesses and trust\-ed
devices (TDs) placed at strategic locations that users are expected to
encounter frequently. If a location proof has a TD as witness, the
resulting proof is collusion-resistant.
However, proofs
for locations out of range of a TD rely on ordinary witnesses, and
offer no protection from collusion.

Many prior works have addressed infrastructure-based location proofs
in the context of vehicular
networks~\cite{6810011,9091099,Graham2009,9500728}.
Pham et al.~\cite{Pham14ubicomp} and Maia et al.~\cite{cross-city}
describe location proofs for specific applications, namely proof of
activity (e.g., distance/elevation traveled) and smart tourism (e.g.,
verifying that a tourist has visited certain sites), respectively.

 In contrast, {\sys} does not require infrastructure for location
  proofs; it relies solely on peer devices' histories.

\btitle{Infrastructure-less location proofs.}
Other prior work~\cite{link, stamp, applaus, globalattestation,pasport,props}
relies on BT witnesses and suggest or include collusion defenses.
APPLAUS~\cite{applaus} suggests weighted witness quorums where a device's trust level
depends on its history of witness choice, relative to the reported
density of location proofs nearby in space and time.
LINK's~\cite{link} per-device trust scores increase additively when a
device votes with the majority in a successful proof, and diminish
multiplicatively when the device acts suspiciously.
STAMP~\cite{stamp}
computes a per-device {\em entropy} value, which reflects the
diversity and randomness in the device's witness selection and is
additionally boosted whenever the device participates in a location
proof that is confirmed by a trusted device.
Arunkumar et al.~\cite{globalattestation}
check if the near-range radio contacts and trajectories reported by
all devices are consistent, and compute a trust score based on the
ratio of encounters in which the device reports consistent information.
A device's location is considered confirmed if the witness with the
highest trust score agrees. PROPS~\cite{props} and
PASPORT~\cite{pasport} do not consider collusion between witnesses and
provers; their defenses work when either the witnesses or the prover
is malicious.  All these systems are vulnerable to attacks with
fictitious devices, which can fabricate plausible trajectories,
encounters, witness selections, and even proofs involving trusted
devices using a small number of corrupt devices and Sybils, thus
acquiring trust to the point where they are indistinguishable from
honest devices.

\btitle{Hardware-based location proofs}
Some prior work relies on a hardware root-of-trust (e.g., TPM modules)
on mobile devices to ensure the integrity of location
readings~\cite{gilbert-hotm10,saroiu-hotm10,tpmlsb}.
Some work has addressed GPS spoofing in specific contexts like
aviation (Crowd-GPS-Sec~\cite{Jansen-SP18}) and mobile
games~\cite{Wong-ISA20}.
Gonz{\'a}lex-Tablas et al.~\cite{LAP-survey-05} provide a survey of
location authentication protocols and spatial-temporal attestation
services.  In contrast to these, \sys\ does not require trusted
  hardware on the devices.

\section{Conclusion}
\label{sec:conclusion}

ProLoc enables a mobile service to compute retroactive location proofs 
by integrating devices' trajectories and encounters.
ProLoc computes a feasible region, including the
locations at which the device could plausibly have resided at the
time, and it does so in a principled manner from the set of
encounters the device had around the time and its peers'
trajectories.
Moreover, {\sys} mitigates
large-scale collusion involving fictitious devices and Sybils,
effectively limiting the adversary's power to a small multiple of the
corrupt physical devices it controls. As a result, even a powerful
attacker is very limited in its ability to retroactively generate
false proofs. Finally, we sketch how to extend {\sys}'s defenses to
premeditated attacks, although a full treatment of such defenses
remains as future work.

\iftr
\appendix
\section{Edge-weighting scheme}
\label{sec:edge-weighting}

\begin{figure*}
\[
  e\left(d,g\right)\triangleq
  \min\left(\textrm{weight\_cap},
  \bigint_{\textrm{time } t}
  \frac{
    \left(\left\lbrace
    \begin{array}{ll}
      1 & \mbox{if $g$ recd.\ advert from $d$ in $[t-w/2, t+w/2]$} \\
      0 & \mbox{o/w}
    \end{array}
    \right.\right)
    \textrm{d}t
  }{
    \left(
    \begin{array}{c}
      \mbox{\# devices from which $g$}\\
      \mbox{ recd.\ adverts in $[t-w/2, t+w/2]$}
    \end{array}
    \right)^L
  }
  \right)
  \]
  \caption{Edge-weighting scheme of \cref{sec:trustrank} in continuous time}
  \label{fig:edge-weighting-continuous}
\end{figure*}

In \cref{sec:trustrank}, we described an approximation of our actual
edge-weighting scheme for defeating Sybil multiplicity attacks. Here,
we describe the actual scheme. This scheme also uses a time window $w$
(typically 8 minutes), but it does not divide time at each node into
discrete epochs to distribute edge weights. Instead, it considers
infinitesimally small epochs over intervals $[t, t+\textrm{d}t]$ at
each node $g$. In each such epoch, $g$ gives some weight to the
outgoing edge towards every device $d$ from which $g$ has received an
advertisement in the vicinity of $t$, specifically, in the interval
$[t-w/2, t+w/2]$. The weight given to each such edge in the epoch is
$dt/(\mbox{\# such devices})^L$. The total weight of the outgoing edge
towards $d$ is the \emph{integral} over time of these infinitesimally
small weights.

One additional difference from what we described in
\cref{sec:trustrank} is that we cap the weights of edges to prevent
any device from getting too much trust from a single device.
\Cref{fig:edge-weighting-continuous} summarizes our revised
edge-weight function, $e(d,g)$.

\fi
\bibliographystyle{plain}
\bibliography{main}

\end{document}